  \providecommand\BibTeX{{%
    \normalfont B\kern-0.5em{\scshape i\kern-0.25em b}\kern-0.8em\TeX}}}
\newtheorem{theorem}{Theorem}[section]
\newtheorem{lemma}[theorem]{Lemma}
\providecommand{\keywords}[1]
{
  \small	
  \textbf{Keywords:} #1
}
\title{Optimally selecting the top $k$ values from $X+Y$ with layer-ordered heaps}
\author{Oliver Serang\\
University of Montana \\ Dept. of Computer Science \\ Missoula, MT, USA}
\date{\today}
\begin{document}

\maketitle

\begin{abstract}
  \noindent Selection and sorting the Cartesian sum, $X+Y$, are
  classic and important problems. Here, a new algorithm is presented,
  which generates the top $k$ values of the form $X_i+Y_j$. The
  algorithm relies only on median-of-medians and is simple to
  implement. Furthermore, it uses data structures contiguous in
  memory, and is fast in practice. The presented algorithm is
  demonstrated to be theoretically optimal.
\end{abstract}

\keywords{Selection, Cartesian product, Soft heap, Applied combinatorics, Discrete algorithms}

\section{Introduction}
Given two vectors of length $n$, $X$ and $Y$, top-$k$ on $X+Y$ finds
the $k$ smallest values of the form $X_i+Y_j$. Note that this problem
definition is presented w.l.o.g.; $X$ and $Y$ need not share the same
length. Top-$k$ is important to practical applications, such as
selecting the most abundant $k$ isotope peaks from a
compound\cite{kreitzberg:fast}. Top-$k$ is $\in\Omega(n+k)$, because
loading the vectors is $\in\Theta(n)$ and returning the minimal $k$
values is $\in\Theta(k)$.

\subsection{Existing, tree-based methods for top-$k$}
In 1982, Frederickson \& Johnson introduced a method reminiscent of
median-of-medians\cite{blum:time}; their method selects only the
$k^{\mbox{\footnotesize th}}$ value from $X+Y$ in $O(n +
\min(n,k)\log(\frac{k}{\min(n,k)}))
$ steps\cite{frederickson:complexity}.

Frederickson subsequently published a second algorithm, which finds the
$k$ smallest elements from a min-heap in $O(k)$, assuming the heap has
already been built\cite{frederickson:optimal}. Combining this method
with a combinatoric heap on $X+Y$ (described below for the Kaplan
\emph{et al.} method) solves top-$k$ in $O(n+k)$. Frederickson's
method uses a tree data structure similar to what would in 2000 be
formalized into Chazelle's soft heap\cite{chazelle:soft}, and can be
combined with a combinatoric heap to compute the
$k^{\mbox{\footnotesize th}}$ smallest value from $X+Y$.

Kaplan \emph{et al.} described an alternative method for selecting the
$k^{\mbox{\footnotesize th}}$ smallest value\cite{kaplan:selection};
that method explicitly used Chazelle's soft
heaps\cite{chazelle:soft}. By heapifying $X$ and $Y$ in linear time
(\emph{i.e.}, guaranteeing w.l.o.g. that $X_i \leq X_{2 i}, X_{2 i +
  1}$), $\min_{i,j} X_i+Y_j = X_1 + Y_1$.  Likewise, $X_i+Y_j \leq
X_{2 i} + Y_j, X_{2 i+1} + Y_j, X_i + Y_{2 j}, X_i + Y_{2 j + 1}$. The
soft heap is initialized to contain tuple $(X_1+Y_1,1,1)$. Then, as
tuple $(v,i,j)$ is popped from soft heap, lower-quality tuples are
inserted into the soft heap. These lower-quality tuples of $(i,j)$ are
\begin{equation}
\label{eqn:kaplan-proposal}
\begin{cases}
  \{(2i,1), (2i+1,1), (i,2), (i,3)\}, & j=1\\
  \{(i,2j), (i,2j+1)\}, & j>1.\\
\end{cases}
\end{equation}
In the matrix $X_i+Y_j$ (which is not realized), this scheme
progresses in row-major order, thereby avoiding a tuple being added
multiple times.

To compute the $k^{\mbox{\footnotesize th}}$ smallest value from
$X+Y$, the best $k$ values are popped from the soft heap. Even though
only the minimal $k$ values are desired, ``corruption'' in the soft
heap means that the soft heap will not always pop the minimal value;
however, as a result, soft heaps can run faster than the $\Omega(n
\log(n))$ lower bound on comparison sorting. $\epsilon \in
(0,\frac{1}{2}]$ is a free parameter to the soft heap. It bounds the
  number of corrupted elements in the soft heap (which may be promoted
  earlier in the queue than they should be) as $\leq t \cdot
  \epsilon$, where $t$ is the number of insertions into the soft heap
  thus far. Thus, instead of popping $k$ items (and inserting their
  lower-quality dependents as described in
  equation~\ref{eqn:kaplan-proposal}), the total number of pops $p$
  can be found: The maximal size of the soft heap after $p$ pops is
  $\leq 3p$ (because each pop removes one element and inserts $\leq 4$
  elements according to equation~\ref{eqn:kaplan-proposal});
  therefore, $p - corruption \geq p - 4 p\cdot \epsilon$, and thus $p
  - 4 p\cdot \epsilon \geq k$ guarantees that $p - corruption \geq
  k$. This leads to $p = \frac{k}{1-4\epsilon}$,
  $\epsilon\leq\frac{1}{4}$. This guarantees that $\Theta(k)$ values,
  which must include the minimal $k$ values, are popped. These values
  are post-processed to retrieve the minimal $k$ values via linear
  time one-dimensional selection\cite{blum:time}. For constant
  $\epsilon$, both pop and insertion operations to the soft heap are
  $\in \~{O}(1)$, and thus the overall runtime of the algorithm is
  $\in O(n + k)$.

Note that the Kaplan \emph{et al.} method easily solves top-$k$ in
$O(n+k)$ steps; this is because computing the $k^{\mbox{\footnotesize
    th}}$ smallest value from $X+Y$ pops the minimal $k$ values from
the soft heap.

\subsection{Layer-ordered heaps and a novel selection algorithm on $X+Y$}
This paper uses layer-ordered heaps (LOHs)\cite{kreitzberg:selection}
to produce an optimal selection algorithm on $X+Y$. LOHs are stricter
than heaps but not as strict as sorting: Heaps guarantee only that
$X_i \leq X_{child(i)}$, but do not guarantee any ordering between one
child of $X_i$, $a$, and the child of the sibling of $a$. Sorting is
stricter still, but sorting $n$ values cannot be done faster than
$\log_2(n!) \in \Omega(n \log(n))$. LOHs partition the array into
several layers such that the values in a layer are $\leq$ to the
values in subsequent layers: $X^{(u)} = X^{(u)}_1, X^{(u)}_2, \ldots
\leq X^{(u+1)}$. The size of these layers starts with $X^{(1)}=1$ and
grows exponentially such that $\lim\limits_{i\rightarrow \infty}
\frac{|X^{(u+1)}|}{|X^{(u)}|} = \alpha \geq 1$ (note that $\alpha=1$
is equivalent to sorting because all layers have size 1). By assigning
values in layer $u$ children from layer $u+1$, this can be seen as a
more constrained form of heap; however, unlike sorting, for any
constant $\alpha>1$, LOHs can be constructed $\in O(n)$ by performing
iterative linear time one-dimensional selection, iteratively selecting
and removing the largest layer until all layers have been partitioned.

Although selections reminiscent of LOHs may have been used previously,
formalization of rank $\alpha$ LOHs has been necessary to demonstrate
that for $1\ll\alpha\ll 2$, a combination of LOHs and soft heaps allow
generating the minimum $k$ values from $X_1+X_2+\cdot+X_m$ (where each
$X_i$ has length $n$) in $o(n\cdot m + k\cdot
m)$\cite{kreitzberg:selection}. Furthermore, efficiently constructing
an LOH of rank $\alpha$ is not trivial when $\alpha\ll 2$; after all,
$\alpha\approx 1$ results in layers of size $|X^{(1)}|=|X^{(2)}|=\cdots=1$,
indicating a sorting, which implies a runtime
$\in\Omega(n\log(n))$\cite{pennington:optimal}.

A {\tt python} implementation of a LOH is shown in
listing~\ref{alg:layer-ordered-heap}.

\subsection{Contribution in this manuscript}
The new, optimal algorithm for solving top-$k$ presented here makes
extensive use of LOHs. It is simple to implement, does not rely on
anything more complicated than linear time one-dimensional selection
(\emph{i.e.}, it does not use soft heap). Due to its simplicity and
contiguous memory access, it has fast performance in practice.

\section{Methods}
\subsection{Algorithm}
\subsubsection{Phase 0}
The algorithm first LOHifies (\emph{i.e.}, constructs a layer order heap
from) both $X$ and $Y$. This is performed by using linear time
one-dimensional selection to iteratively remove the largest remaining
layer (\emph{i.e.}, the simplest LOH construction method, which is
optimal when $\alpha\gg 1$).

\subsubsection{Phase 1}
Now layer products of the form $X^{(u)} + Y^{(v)} = X^{(u)}_1 +
Y^{(v)}_1, X^{(u)}_1 + Y^{(v)}_2, \ldots X^{(u)}_2 + Y^{(v)}_1,
\ldots$ are considered, where $X^{(u)}$ and $Y^{(v)}$ are layers of
their respective LOHs.

In phases 1--2, the algorithm initially considers only the minimum and
maximum values in each layer product: $\lfloor(u,v)\rfloor =
(\min(X^{(u)}+Y^{(v)}), (u,v), false)$, $\lceil (u,v)\rceil =
(\max(X^{(u)}+Y^{(v)}), (u,v), true)$. It is unnecessary to compute
the Cartesian product of values to build a layer product; instead,
only the minimum or maximum values in $X^{(u)}$ and $Y^{(v)}$ are
needed. Note that $false$ is used to indicate that this is the minimal
value in the layer product, while $true$ indicates the maximum value
in the layer product. Let $false=0, true=1$ so that
$\lfloor(u,v)\rfloor < \lceil(u,v)\rceil$. Scalar values can be
compared to tuples: $X_i+Y_j \leq \lceil(u,v)\rceil =
(\max(X^{(u)}+Y^{(v)}),(u,v),true) \leftrightarrow X_i+Y_j\leq
\max(X^{(u)}+Y^{(v)})$.

Binary heap $H$ is initialized to contain tuple $\lfloor (1,1)\rfloor$. A set
of all tuples in $H$ is maintained to prevent duplicates from being
inserted into $H$. The algorithm proceeds by popping the
lexicographically minimum tuple from $H$. W.l.o.g., there is not
guaranteed ordering of the form $X^{(u)}+Y^{(v)} \leq
X^{(u+1)}+Y^{(v)}$, because it may be that $\max(X^{(u)}+Y^{(v)}) >
\min(X^{(u+1)}+Y^{(v)})$; however, lexicographically, $\lfloor(u,v)\rfloor <
\lfloor(u+1,v)\rfloor, \lfloor(u,v+1)\rfloor, \lceil(u,v)\rceil$;
thus, the latter tuples need be inserted into $H$ only after
$\lfloor(u,v)\rfloor$ has been popped from $H$. $\lceil(u,v)\rceil$
tuples do not insert any new tuples into $H$ when they're popped.

Whenever a tuple of the form $\lceil(u,v)\rceil$ is popped from $H$,
the index $(u,v)$ is appended to list $q$ and the size of the layer
product $|X^{(u)}+Y^{(v)}|=|X^{(u)}|\cdot|Y^{(v)}|$ is accumulated
into integer $s$. This method proceeds until $s \geq k$.

\subsubsection{Phase 2}
Any remaining tuple in $H$ of the form $(\max(X^{(u')}+Y^{(v')}),
(u',v'), true)$ has its index $(u',v')$ appended to list $q$. $s'$ is
the total number of elements in each of these $(u',v')$ layer products
appended to $q$ during phase 2.

\subsubsection{Phase 3}
The values from every element in each layer product in $q$ is
generated. A linear time one-dimensional $k$-selection is performed on
these values and returned.

\subsection{Proof of correctness}
Lemma~\ref{thm:q-contains-minimal-k} proves that at termination all
layer products found in $q$ must contain the minimal $k$ values in
$X+Y$. Thus, by performing one-dimensional $k$-selection on those
values in phase 3, the minimal $k$ values in $X+Y$ are found.

\begin{lemma}
  \label{thm:convexity-of-popping-min}
  If $\lfloor(u,v)\rfloor$ is popped from $H$, then both $\lfloor(u-1,v)\rfloor$ (if $u>1$) and $\lfloor(u,v-1)\rfloor$ (if $v>1$) must previously have been popped from $H$. 
\end{lemma}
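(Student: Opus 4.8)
The plan is to combine two structural facts about the heap $H$ with a strong induction on the lexicographic order of the popped tuple. First I would record a \emph{monotonicity} fact inherited from the LOH layering: since every element of $X^{(u-1)}$ is $\leq$ every element of $X^{(u)}$, we have $\min(X^{(u-1)}+Y^{(v)}) \leq \min(X^{(u)}+Y^{(v)})$, and breaking ties by the index component gives $\lfloor(u-1,v)\rfloor < \lfloor(u,v)\rfloor$ lexicographically (and symmetrically $\lfloor(u,v-1)\rfloor < \lfloor(u,v)\rfloor$). Second, I would establish a \emph{min-pop invariant}: at the instant $\lfloor(u,v)\rfloor$ is popped it is the lexicographic minimum of $H$, so any tuple that was inserted at an earlier time and is lexicographically smaller than $\lfloor(u,v)\rfloor$ cannot still reside in $H$; because the duplicate-prevention set forbids re-insertion, such a tuple must already have been popped. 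Thus ``inserted-before-and-smaller'' implies ``popped-before.''

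Next I would use the insertion rule, which creates $\lfloor(u,v)\rfloor$ only as the $u$-increment of $\lfloor(u-1,v)\rfloor$ or the $v$-increment of $\lfloor(u,v-1)\rfloor$. Hence for $\lfloor(u,v)\rfloor$ to be popped it must first be inserted, so at least one predecessor was already popped; this immediately settles the boundary cases $u=1$ or $v=1$, where only a single predecessor exists. For the interior case I would induct on lexicographic order: assuming the lemma for every floor-tuple smaller than $\lfloor(u,v)\rfloor$, I treat the two symmetric situations according to which predecessor inserted $\lfloor(u,v)\rfloor$.

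The main obstacle is that the insertion rule hands us only \emph{one} popped predecessor, whereas the claim demands \emph{both}; the crux is to recover the missing one. Suppose $\lfloor(u,v)\rfloor$ was inserted when $\lfloor(u-1,v)\rfloor$ was popped (the other case is symmetric), so the $u$-predecessor is done. Applying the induction hypothesis to the lexicographically smaller tuple $\lfloor(u-1,v)\rfloor$ shows its own $v$-predecessor $\lfloor(u-1,v-1)\rfloor$ was popped even earlier; but popping $\lfloor(u-1,v-1)\rfloor$ inserts $\lfloor(u,v-1)\rfloor$ as its $u$-increment. Therefore $\lfloor(u,v-1)\rfloor$ enters $H$ strictly before $\lfloor(u,v)\rfloor$ is popped, and since $\lfloor(u,v-1)\rfloor < \lfloor(u,v)\rfloor$, the min-pop invariant forces $\lfloor(u,v-1)\rfloor$ to have been popped first as well, completing the induction. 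I expect the only delicate bookkeeping to be the tie-breaking in the lexicographic comparison and confirming that the diagonal predecessor $\lfloor(u-1,v-1)\rfloor$ indeed exists, which is guaranteed by $u>1$ and $v>1$ in the interior case.
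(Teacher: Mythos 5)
Your proof is correct and is essentially the paper's argument made rigorous: both hinge on the same key mechanism, namely that the pop of the diagonal predecessor $\lfloor(u-1,v-1)\rfloor$ inserts the missing sibling $\lfloor(u,v-1)\rfloor$ (or $\lfloor(u-1,v)\rfloor$), which is lexicographically smaller than $\lfloor(u,v)\rfloor$ and therefore must be popped first. Your explicit strong induction on lexicographic order is a cleaner formalization of the paper's informal backward chain-of-insertions argument (its ``By that reasoning'' step), and it even covers the case where the chain runs straight rather than turning at a corner immediately before $(u,v)$ --- a case the paper's corner-structure phrasing glosses over.
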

\begin{proof}
There is a chain of pops and insertions backwards from
$\lfloor(u,v)\rfloor$ to $\lfloor(1,1)\rfloor$. This chain must
include structures of pops of the form $\lfloor(a-1,b-1)\rfloor,
\lfloor(a,b-1)\rfloor, \lfloor(a,b)\rfloor$ or
$\lfloor(a-1,b-1)\rfloor, \lfloor(a-1,b)\rfloor,
\lfloor(a,b)\rfloor$. W.l.o.g., pops of $\lfloor(a-1,b-1)\rfloor,
\lfloor(a,b-1)\rfloor, \lfloor(a,b)\rfloor$ mean that
$\lfloor(a-1,b)\rfloor$ would be inserted into $H$ before
$\lfloor(a,b)\rfloor$, and since $\lfloor(a,b-1)\rfloor <
\lfloor(a,b)\rfloor$, it must be popped before
$\lfloor(a,b)\rfloor$. By that reasoning, $\lfloor(u-1,v)\rfloor$ and
$\lfloor(u,v-1)\rfloor$ must be popped before $\lfloor(u,v)\rfloor$. 
$\square$
\end{proof}

\begin{lemma}
  \label{thm:convexity-of-popping-max}
  If $\lceil(u,v)\rceil$ is popped from $H$, then both $\lceil(u-1,v)\rceil$ (if $u>1$) and $\lceil(u,v-1)\rceil$ (if $v>1$) must previously have been popped from $H$. 
\end{lemma}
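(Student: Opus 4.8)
The plan is to mirror the structure used for Lemma~\ref{thm:convexity-of-popping-min}, but to route the argument through that lemma together with an elementary property of binary heaps. The central observation I would use is a \emph{heap-ordering principle}: if two tuples $A$ and $B$ satisfy $A < B$ lexicographically, and $A$ is inserted into $H$ no later than the moment $B$ is popped, then $A$ must be popped before $B$. This holds because at the instant $B$ is popped it is the lexicographic minimum of $H$; were $A$ still present it would contradict minimality, and since the only way a tuple leaves $H$ is by being popped, an $A$ inserted earlier must already have been popped.

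First I would trace how $\lceil(u,v)\rceil$ can enter $H$ at all. By the insertion rule, $\lceil(u,v)\rceil$ is placed into $H$ only when $\lfloor(u,v)\rfloor$ is popped; hence popping $\lceil(u,v)\rceil$ requires that $\lfloor(u,v)\rfloor$ was popped strictly earlier. Applying Lemma~\ref{thm:convexity-of-popping-min} to $\lfloor(u,v)\rfloor$ (for $u>1$) shows that $\lfloor(u-1,v)\rfloor$ was popped earlier still, and that pop is exactly what inserts $\lceil(u-1,v)\rceil$ into $H$. Chaining these events gives that $\lceil(u-1,v)\rceil$ is inserted before $\lceil(u,v)\rceil$ is popped, which is precisely the timing hypothesis the heap-ordering principle requires.

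Next I would establish the lexicographic inequality $\lceil(u-1,v)\rceil < \lceil(u,v)\rceil$. The LOH layer property guarantees that every value of $X^{(u-1)}$ is at most every value of $X^{(u)}$, so $\max(X^{(u-1)}) \leq \max(X^{(u)})$ and therefore $\max(X^{(u-1)}+Y^{(v)}) \leq \max(X^{(u)}+Y^{(v)})$. If this inequality is strict the first tuple component already decides the comparison; if it is an equality the tie is broken by the index component, where $(u-1,v) < (u,v)$. Either way $\lceil(u-1,v)\rceil < \lceil(u,v)\rceil$. Combining this with the timing fact above and the heap-ordering principle yields that $\lceil(u-1,v)\rceil$ is popped before $\lceil(u,v)\rceil$. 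The argument for $\lceil(u,v-1)\rceil$ (when $v>1$) is identical with the roles of the two coordinates exchanged.

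I expect the main obstacle to be bookkeeping the order of events correctly --- insertion of $\lceil(u-1,v)\rceil$, pop of $\lfloor(u,v)\rfloor$, and insertion then pop of $\lceil(u,v)\rceil$ --- so that the timing hypothesis of the heap-ordering principle is actually met. This is exactly where Lemma~\ref{thm:convexity-of-popping-min} does the essential work: without it there would be no guarantee that $\lceil(u-1,v)\rceil$ has even been inserted by the time $\lceil(u,v)\rceil$ is popped, and the lexicographic inequality alone would not suffice.
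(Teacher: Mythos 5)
Your proof is correct and follows essentially the same route as the paper's: popping $\lceil(u,v)\rceil$ requires first popping $\lfloor(u,v)\rfloor$, Lemma~\ref{thm:convexity-of-popping-min} forces the earlier pops of $\lfloor(u-1,v)\rfloor$ and $\lfloor(u,v-1)\rfloor$, those pops insert $\lceil(u-1,v)\rceil$ and $\lceil(u,v-1)\rceil$, and the lexicographic ordering then forces them to be popped first. The only difference is that you spell out two steps the paper leaves implicit --- the heap-ordering principle and the verification that $\lceil(u-1,v)\rceil < \lceil(u,v)\rceil$ via the layer property --- which is a fine elaboration, not a different argument.
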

\begin{proof}
  Inserting $\lceil(u,v)\rceil$ requires previously popping
  $\lfloor(u,v)\rfloor$. By lemma~\ref{thm:convexity-of-popping-min},
  this requires previously popping $\lfloor(u-1,v)\rfloor$ (if $u>1$)
  and $\lfloor(u,v-1)\rfloor$ (if $v>1$). These pops will insert
  $\lceil(u-1,v)\rceil$ and $\lceil(u,v-1)\rceil$ respectively. Thus,
  $\lceil(u-1,v)\rceil$ and $\lceil(u,v-1)\rceil$, which are both
  $<\lceil(u,v)\rceil$, are inserted before $\lceil(u,v)\rceil$, and
  will therefore be popped before $\lceil(u,v)\rceil$.
$\square$\end{proof}

\begin{lemma}
  \label{thm:corners-visited-in-sorted-order}
  Minimum (\emph{i.e.}, $\lfloor\cdot\rfloor$) and maximum
  (\emph{i.e.}, $\lceil\cdot\rceil$) tuples from all layer products
  will be popped from $H$ in ascending order.
\end{lemma}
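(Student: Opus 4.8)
The plan is to recognize this as a standard best-first (Dijkstra-like) monotonicity argument: since $H$ is a min-heap that always pops its lexicographically smallest element, the popped sequence is sorted precisely when no pop-and-insert step can lower the heap's current minimum. I would therefore show that the minimum of $H$ never decreases from one pop to the next, which, together with the fact that all tuples are distinct, forces the popped sequence to be strictly ascending.

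First I would isolate the key invariant: \emph{every tuple inserted into $H$ is lexicographically strictly greater than the tuple whose popping triggered its insertion.} Popping a maximum tuple $\lceil(u,v)\rceil$ inserts nothing, so the invariant is vacuous there. Popping a minimum tuple $\lfloor(u,v)\rfloor$ inserts $\lfloor(u+1,v)\rfloor$, $\lfloor(u,v+1)\rfloor$, and $\lceil(u,v)\rceil$, and I would check each is lex-greater than $\lfloor(u,v)\rfloor$. For $\lceil(u,v)\rceil$ the value satisfies $\max(X^{(u)}+Y^{(v)}) \geq \min(X^{(u)}+Y^{(v)})$, and any tie is broken by the boolean flag since $true=1 > 0=false$. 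For $\lfloor(u+1,v)\rfloor$ and $\lfloor(u,v+1)\rfloor$, the LOH layer ordering $X^{(u)} \leq X^{(u+1)}$ gives $\min(X^{(u+1)}+Y^{(v)}) \geq \min(X^{(u)}+Y^{(v)})$, and any tie is broken by the index, since $(u+1,v), (u,v+1) > (u,v)$.

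Then I would run the monotonicity argument. Let $m_t$ denote the tuple popped at step $t$, so that $m_t$ is the minimum of the heap contents at that step. After the pop, the heap consists only of (i) tuples already present, each $\geq m_t$ because $m_t$ was the minimum, and (ii) freshly inserted successors of $m_t$, each $> m_t$ by the invariant. Hence every tuple remaining in the heap is $\geq m_t$, so the next popped minimum satisfies $m_{t+1} \geq m_t$. Because the duplicate-prevention set guarantees all tuples in $H$ are distinct, this inequality is in fact strict, giving $m_{t+1} > m_t$ and therefore a strictly ascending pop order over all minimum and maximum tuples.

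The argument has no deep obstacle; the only point demanding care is the invariant itself, and specifically the value comparisons for $\lfloor(u+1,v)\rfloor$ and $\lfloor(u,v+1)\rfloor$, which are exactly where the LOH layer-ordering guarantee (rather than a mere heap property) is needed, reinforced by the lexicographic tie-break via index and the $false<true$ convention so that equal layer-product extrema still compare in the intended direction.
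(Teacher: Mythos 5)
Your proof is correct, but it takes a genuinely different route from the paper's. You prove the lemma as an instance of the standard best-first (Dijkstra-style) monotonicity argument: you establish the local invariant that every tuple inserted upon popping a tuple $z$ is lexicographically strictly greater than $z$ --- verified for $\lceil(u,v)\rceil$ via the $false<true$ tie-break and for $\lfloor(u+1,v)\rfloor,\lfloor(u,v+1)\rfloor$ via the LOH layer ordering plus the index tie-break --- after which the heap minimum can never decrease, and within-heap distinctness (from the duplicate-prevention set) upgrades this to strictly ascending pops. The paper instead fixes two comparable tuples, say $\lfloor(a,b)\rfloor<\lfloor(u,v)\rfloor$, and shows the smaller is popped first by case analysis on the index geometry: for $a<u,\,b\le v$ it applies induction to Lemma~\ref{thm:convexity-of-popping-min}, and for $a<u,\,b>v$ it tracks intermediate tuples $\lfloor(a,r)\rfloor$ that must remain in $H$ until $\lfloor(a,b)\rfloor$ is popped, then repeats the analysis for the max/max and mixed cases. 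Your argument is shorter, handles all four comparison cases uniformly, and needs neither Lemma~\ref{thm:convexity-of-popping-min} nor Lemma~\ref{thm:convexity-of-popping-max}; moreover, the invariant you verify is exactly the assertion the paper states without proof in its Phase 1 description (that $\lfloor(u,v)\rfloor < \lfloor(u+1,v)\rfloor, \lfloor(u,v+1)\rfloor, \lceil(u,v)\rceil$), so you have effectively supplied that justification and derived global sortedness from it. What the paper's route buys is that it stays within the combinatorial lattice of layer-product indices --- the convex ``staircase'' picture of popped indices that its later runtime lemmas reuse --- at the cost of a longer, case-heavy proof. One point in your write-up deserving an explicit sentence: strictness over the whole pop sequence also requires that a popped tuple is never re-inserted later; this does follow from your own machinery (any later insertion strictly exceeds a later, hence no smaller, minimum), but it is a separate observation from within-heap distinctness and is worth stating.
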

\begin{proof}
Let $\lfloor(u,v)\rfloor$ be popped from $H$ and let
$\lfloor(a,b)\rfloor < \lfloor(u,v)\rfloor$. Either
w.l.o.g. $a<u,b\leq v$, or w.l.o.g. $a<u, b>v$. In the former case,
$\lfloor(a,b)\rfloor$ will be popped before $\lfloor(u,v)\rfloor$ by
applying induction to lemma~\ref{thm:convexity-of-popping-min}.

In the latter case, lemma~\ref{thm:convexity-of-popping-min} says that
$\lfloor(a,v)\rfloor$ is popped before
$\lfloor(u,v)\rfloor$. $\lfloor(a,v)\rfloor < \lfloor(a,b)\rfloor <
\lfloor(u,v)\rfloor$, meaning that $\forall v \geq r\leq b,~
\lfloor(a,r)\rfloor < \lfloor(u,v)\rfloor$. After
$\lfloor(a,v)\rfloor$ is inserted (necessarily before it is popped),
at least one such $\lfloor(a,r)\rfloor$ must be in $H$ until
$\lfloor(a,b)\rfloor$ is popped. Thus, all such $\lfloor(a,r)\rfloor$
will be popped before $\lfloor(u,v)\rfloor$.

Ordering on popping with $\lceil(a,b)\rceil<\lceil(u,v)\rceil$ is
shown in the same manner: For $\lceil(u,v)\rceil$ to be in $H$,
$\lfloor(u,v)\rfloor$ must have previously been popped. As above,
whenever $\lceil(u,v)\rceil$ is in $H$ at least one
$\lfloor(a,r)\rfloor, v \geq r\leq b$ must also be in $H$ until
$\lfloor(a,b)\rfloor$ is popped. These $\lfloor(a,r)\rfloor \leq
\lfloor(a,b)\rfloor < \lceil(a,b)\rceil < \lceil(u,v)\rceil$, and so
$\lceil(a,b)\rceil$ will be popped before $\lceil(u,v)\rceil$.

Identical reasoning also shows that $\lfloor(a,b)\rfloor$ will pop
before $\lceil(u,v)\rceil$ if $\lfloor(a,b)\rfloor<\lceil(u,v)\rceil$
or if $\lceil(a,b)\rceil<\lfloor(u,v)\rfloor$.

Thus, all tuples are popped in ascending order. 
$\square$\end{proof}

\begin{lemma}
  \label{thm:q-contains-minimal-k}
  At the end of phase 2, the layer products whose indices are found in $q$ contain the minimal $k$ values. 
\end{lemma}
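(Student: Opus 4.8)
The plan is to identify the threshold value $\tau = \max(X^{(u^\star)}+Y^{(v^\star)})$ carried by the $\lceil\cdot\rceil$ tuple whose popping in phase 1 first drives $s \geq k$, and then to establish two facts about $\tau$: first, that the $k$-th smallest value of $X+Y$ is at most $\tau$; and second, that every layer product whose minimum is at most $\tau$ has its index placed in $q$ by the end of phase 2. Granting both, the minimal $k$ values all lie at or below $\tau$, hence each resides in a layer product indexed in $q$, which is precisely the claim; the phase-3 one-dimensional selection then recovers them.

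For the first fact I would invoke Lemma~\ref{thm:corners-visited-in-sorted-order}: since tuples leave $H$ in ascending order, every $\lceil(u,v)\rceil$ popped during phase 1 has value at most $\tau$, so every element of its layer product is at most $\max(X^{(u)}+Y^{(v)}) \le \tau$. Because distinct layer products partition the index pairs, these elements are distinct, and their count is exactly the terminating value of $s \geq k$. Thus at least $k$ values of $X+Y$ are $\le \tau$, forcing the $k$-th order statistic to be $\le \tau$.

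The second fact is the crux and the step I expect to be the main obstacle. Fix $(u,v)$ with $\min(X^{(u)}+Y^{(v)}) \le \tau$; I want $\lfloor(u,v)\rfloor$ to have been popped during phase 1, since its popping inserts $\lceil(u,v)\rceil$, which is then either popped in phase 1 or swept into $q$ in phase 2. By Lemma~\ref{thm:convexity-of-popping-min} the set of indices whose $\lfloor\cdot\rfloor$ has been popped is closed under decrementing either coordinate, i.e., a down-closed staircase containing $(1,1)$. If $(u,v)$ lay outside this staircase, then a monotone lattice path from $(1,1)$ to $(u,v)$ would cross the boundary at some $(u',v')$ with $u'\le u$, $v'\le v$ that was inserted (by the popping of its in-staircase predecessor) but never popped, hence still resident in $H$ at termination; the min-heap property at the terminating pop then gives $\lfloor(u',v')\rfloor \ge \tau$. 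But the LOH layer ordering yields $\min X^{(u')} \le \min X^{(u)}$ and $\min Y^{(v')} \le \min Y^{(v)}$, so $\lfloor(u',v')\rfloor \le \lfloor(u,v)\rfloor \le \tau$, forcing equality throughout. Hence $\lfloor(u,v)\rfloor$ can fail to be popped only in the boundary tie case $\min(X^{(u)}+Y^{(v)}) = \tau$, and every product with minimum strictly below $\tau$ is captured in $q$.

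The remaining delicacy is exactly this tie at $\tau$, which I would dispatch by combining the two facts rather than strengthening either: the phase-1 products alone contribute $\geq k$ values, all $\le \tau$ and all already indexed in $q$, while every value strictly below $\tau$ also sits in a $q$-indexed product. Consequently the layer products in $q$ contain every value $<\tau$ together with at least enough values equal to $\tau$ to complete a collection of $k$ smallest, so the phase-3 selection returns the minimal $k$ values regardless of how ties at $\tau$ are resolved.
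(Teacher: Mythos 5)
Your proof is correct and follows the same skeleton as the paper's: identify the threshold $\tau$ attained by the max-tuple whose pop ends phase 1, count the $s \geq k$ phase-1 elements that are $\leq \tau$ to bound the $k$-th order statistic, and then show that every layer product holding values below the threshold is indexed in $q$ by the end of phase 2. The genuine difference is in how you justify the coverage step. The paper cites Lemma~\ref{thm:corners-visited-in-sorted-order} to conclude that any product $(u',v')$ containing a value $< \tau$ must have had $\lfloor(u',v')\rfloor$ popped; strictly speaking, that lemma as stated only orders the tuples that do get popped, and the needed implication (smaller than a popped tuple implies itself popped, earlier) lives in that lemma's proof rather than its statement. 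You noticed this and closed the gap from first principles: by Lemma~\ref{thm:convexity-of-popping-min} the popped indices form a down-closed staircase, a monotone path from $(1,1)$ to an unpopped $(u,v)$ must cross the boundary at a tuple that was inserted but never popped and hence still resides in $H$, and the heap invariant at the terminating pop forces that tuple's value to be $\geq \tau$ while the LOH layer ordering forces it to be $\leq \lfloor(u,v)\rfloor$ --- a contradiction except in the case of a tie at $\tau$. Your explicit resolution of that tie (the phase-1 products already supply at least $k-m$ copies of $\tau$ on top of the $m$ values strictly below it, so the phase-3 selection is correct no matter how ties are broken) is also more careful than the paper's informal remark that improving $q$ ``can only be done by trading some value for a smaller value.'' The paper's version is shorter given its lemma machinery; yours is more self-contained and airtight at exactly the two points where the paper is loose.
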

\begin{proof}
  Let $(u,v)$ be the layer product that first makes $s\geq k$. There
  are at least $k$ values of $X+Y$ that are $\leq
  \max(X^{(u)}+Y^{(v)})$; this means that $\tau = \max( select(X+Y,k)
  ) \leq \max(X^{(u)}+Y^{(v)})$. The quality of the elements in layer
  products in $q$ at the end of phase 1 can only be improved by
  trading some value for a smaller value, and thus require a new value
  $<\max(X^{(u)}+Y^{(v)})$.

  By lemma~\ref{thm:corners-visited-in-sorted-order}, tuples will be
  popped from $H$ in ascending order; therefore, any layer product
  $(u',v')$ containing values $<\max(X^{(u)}+Y^{(v)})$ must have had
  $\lfloor(u',v')\rfloor$ popped before $\lceil(u,v)\rceil$. If
  $\lceil(u',v')\rceil$ was also popped, then this layer product is
  already included in $q$ and cannot improve it. Thus the only layers
  that need be considered further have had $\lfloor(u',v')\rfloor$
  popped but not $\lceil(u',v')\rceil$ popped; these can be found by
  looking for all $\lceil(u',v')\rceil$ that have been inserted into $H$
  but not yet popped.

  Phase 2 appends to $q$ all such remaining layer products of
  interest. Thus, at the end of phase 2, $q$ contains all layer
  products that will be represented in the $k$-selection of $X+Y$.
$\square$\end{proof}

A {\tt python} implementation of this method is shown in
listing~\ref{alg:cartesian-sum-select}.

\subsection{Runtime}
Theorem~\ref{thm:total-runtime-in-o-n-plus-k} proves that the total
runtime is $\in O(n+k)$.

\begin{lemma}
  \label{thm:top-left-neighbor-max-popped-before-self-min-popped}
Let $(u',v')$ be a layer product appended to $q$ during phase
2. Either $u'=1$, $v'=1$, or $(u'-1,v'-1)$ was already appended to $q$
in phase 1.
\end{lemma}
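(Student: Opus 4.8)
The plan is to translate the two phase-membership conditions back into pop-and-insert events on $H$ and then let the ordering lemmas do the work. First I would record the structural facts forced by the hypothesis: since $(u',v')$ is appended during phase~2, the tuple $\lceil(u',v')\rceil$ was inserted into $H$ but never popped during phase~1, and since $\lceil(u',v')\rceil$ is inserted only after $\lfloor(u',v')\rfloor$ is popped, $\lfloor(u',v')\rfloor$ must have been popped (necessarily in phase~1, where all heap pops and the insertions they trigger occur). I would then assume $u'>1$ and $v'>1$, as the other cases are exactly the escape clauses of the statement, and I would name $\lceil(u_*,v_*)\rceil$ the maximum tuple whose pop first drives $s\ge k$, i.e. the final pop of phase~1. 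The goal becomes: show $\lceil(u'-1,v'-1)\rceil$ is popped no later than this terminal event.

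Next I would manufacture the diagonal neighbour's tuples inside $H$. Applying lemma~\ref{thm:convexity-of-popping-min} to the pop of $\lfloor(u',v')\rfloor$ gives that $\lfloor(u'-1,v')\rfloor$ was popped, and applying it once more (legal because $v'>1$) gives that $\lfloor(u'-1,v'-1)\rfloor$ was popped. Popping $\lfloor(u'-1,v'-1)\rfloor$ inserts $\lceil(u'-1,v'-1)\rceil$ into $H$, so this tuple is guaranteed to be present in $H$ at some point of phase~1. What remains is purely an ordering question about when it leaves $H$.

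The crux is to prove $\lceil(u'-1,v'-1)\rceil<\lceil(u_*,v_*)\rceil$, which the ascending pop order then converts into "popped during phase~1." Here I would invoke the inter-layer monotonicity of the LOHs: layer $u'-1$ of $X$ lies entirely $\le$ layer $u'$ and layer $v'-1$ of $Y$ lies entirely $\le$ layer $v'$, so $\max(X^{(u'-1)}+Y^{(v'-1)})\le\min(X^{(u')}+Y^{(v')})$, and with the lexicographic tie-breaking this upgrades to $\lceil(u'-1,v'-1)\rceil<\lfloor(u',v')\rfloor$. Since $\lfloor(u',v')\rfloor$ was popped in phase~1 and $\lceil(u_*,v_*)\rceil$ is that phase's last pop, lemma~\ref{thm:corners-visited-in-sorted-order} yields $\lfloor(u',v')\rfloor<\lceil(u_*,v_*)\rceil$, and transitivity gives $\lceil(u'-1,v'-1)\rceil<\lceil(u_*,v_*)\rceil$. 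Because $\lceil(u'-1,v'-1)\rceil$ is already in $H$ and is strictly smaller than the terminal tuple, the min-popping behaviour of $H$ (backed by lemma~\ref{thm:corners-visited-in-sorted-order}) forces it to be popped strictly before $\lceil(u_*,v_*)\rceil$, hence during phase~1; and since popping a maximum tuple appends its index to $q$, this places $(u'-1,v'-1)$ in $q$ in phase~1.

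I expect the main obstacle to be exactly this sandwiching step. The cheap comparison $\lceil(u'-1,v'-1)\rceil<\lceil(u',v')\rceil$ is useless, since $\lceil(u',v')\rceil$ is precisely a tuple known to sit \emph{above} the terminal, so I must route the inequality through $\lfloor(u',v')\rfloor$ and lean on the LOH property that the maximum of the lower-left layer product cannot exceed the minimum of the current one. The only other care point is tracking strictness under lexicographic tie-breaking, so that "popped at or before the terminal" sharpens to "popped strictly before," which is what separates phase-1 membership from phase-2 membership.
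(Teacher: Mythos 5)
Your proof is correct and takes essentially the same route as the paper's: the decisive step in both is the layer-ordering inequality $\lceil(u'-1,v'-1)\rceil < \lfloor(u',v')\rfloor$ combined with the ascending pop order of lemma~\ref{thm:corners-visited-in-sorted-order}. The paper concludes directly from this, since $\lfloor(u',v')\rfloor$ is popped in phase 1 and anything popped earlier is also a phase-1 pop; your extra scaffolding (the explicit insertion argument via lemma~\ref{thm:convexity-of-popping-min} and the detour through the terminal pop $\lceil(u_*,v_*)\rceil$) is sound but not needed.
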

\begin{proof}
  Let $u'>1$ and $v'>1$. By
  lemma~\ref{thm:corners-visited-in-sorted-order}, minimum and maximum
  layer products are popped in ascending order. By the layer ordering
  property of $X$ and $Y$, $\max(X^{(u'-1)}) \leq \min(X^{(u')})$ and
  $\max(Y^{(v'-1)}) \leq \min(Y^{(v')})$. Thus,
  $\lceil(u'-1,v'-1)\rceil < \lfloor(u',v')\rfloor$ and so
  $\lceil(u'-1,v'-1)\rceil$ must be popped before
  $\lfloor(u',v')\rfloor$.
$\square$\end{proof}

\begin{lemma}
  \label{thm:elements-appended-to-q-in-phase-1-in-o-k}
  $s$, the number of elements in all layer products appended to $q$ in phase 1, is $\in O(k)$. 
\end{lemma}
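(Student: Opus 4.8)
The plan is to bound $s$ by analyzing only the single, last layer product appended during phase 1. Let $(u,v)$ be the layer product whose maximum tuple $\lceil(u,v)\rceil$ is popped last in phase 1, i.e., the one that first makes $s \ge k$. Immediately before its size is accumulated, the running total over all previously appended layer products is strictly less than $k$; call this quantity $s_{\text{prev}} < k$. Then $s = s_{\text{prev}} + |X^{(u)}|\cdot|Y^{(v)}|$, so it suffices to show that the final layer product contributes only $O(k)$ additional elements.

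First I would use lemma~\ref{thm:convexity-of-popping-max} together with the ascending pop order of lemma~\ref{thm:corners-visited-in-sorted-order} to locate a strictly smaller ``predecessor'' layer product that is guaranteed to already lie in $q$. If $u>1$, then $\lceil(u-1,v)\rceil < \lceil(u,v)\rceil$ is popped before $\lceil(u,v)\rceil$; since $\lceil(u,v)\rceil$ is the last pop of phase 1, $(u-1,v)$ was also appended during phase 1, so $|X^{(u-1)}|\cdot|Y^{(v)}| \le s_{\text{prev}} < k$. Symmetrically, if $u=1$ but $v>1$, the predecessor $(u,v-1)$ gives $|X^{(u)}|\cdot|Y^{(v-1)}| < k$. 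The remaining case $u=v=1$ is trivial, since then the layer product has size $|X^{(1)}|\cdot|Y^{(1)}|=1 \le k$.

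Next I would convert the predecessor bound into a bound on the last layer product using the exponential growth of LOH layer sizes. Because consecutive layer sizes grow by a ratio approaching $\alpha$, there is a constant $c=c(\alpha)$ with $|X^{(u)}| \le c\,|X^{(u-1)}|$ and $|Y^{(v)}| \le c\,|Y^{(v-1)}|$ for all valid indices. Applying this to whichever coordinate was decremented above yields $|X^{(u)}|\cdot|Y^{(v)}| < c\,k$ in every nontrivial case, and combining gives $s = s_{\text{prev}} + |X^{(u)}|\cdot|Y^{(v)}| < (1+c)\,k \in O(k)$.

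The main obstacle is the uniform ratio bound $|X^{(u)}|\le c\,|X^{(u-1)}|$: the limit $|X^{(u+1)}|/|X^{(u)}|\to\alpha$ only controls the tail, so I would need to argue that the early layers (where integer rounding can inflate the ratio) still admit a finite bound depending only on $\alpha$, so that $c$ is a genuine constant independent of $n$. Everything else is bookkeeping, namely matching the decremented coordinate to the correct predecessor and confirming, via the ascending pop order, that each predecessor's size is indeed already subsumed in $s_{\text{prev}}$.
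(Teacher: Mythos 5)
Your proposal is correct and follows essentially the same route as the paper's proof: identify the final layer product $(u,v)$ appended in phase 1, observe that all previously appended products sum to less than $k$, use the convexity of max-pops (Lemma~\ref{thm:convexity-of-popping-max}) to find a predecessor $(u-1,v)$ or $(u,v-1)$ already counted in that sum, and inflate its size by the layer-growth ratio to bound the final contribution by $O(k)$. The only notable difference is one of care, not strategy: the paper states the growth bound loosely as $\approx \alpha$, whereas you explicitly flag that a uniform constant $c(\alpha)$ bounding consecutive layer-size ratios is needed --- a technical point the paper itself glosses over.
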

\begin{proof}
$(u,v)$ is the layer product whose inclusion during phase 1 in $q$
  achieves $s\geq k$; therefore, $s - |X^{(u)}+Y^{(v)}| < k$. This
  happens when $\lceil(u,v)\rceil$ is popped from $H$.

If $k=1$, popping $\lceil(1,1)\rceil$ ends phase 1 with $s=1\in O(k)$.

If $k>1$, then at least one layer index is $>1$: $u>1$ or $v>1$. W.l.o.g., let $u>1$. By
lemma~\ref{thm:convexity-of-popping-min}, popping $\lceil(u,v)\rceil$
from $H$ requires previously popping
$\lceil(u-1,v)\rceil$. $|X^{(u)}+Y^{(v)}|=|X^{(u)}|\cdot|Y^{(v)}|\approx
\alpha \cdot |X^{(u-1)}|\cdot|Y^{(v)}| = \alpha \cdot |X^{(u-1)} +
Y^{(v)}|$; therefore, $|X^{(u)}+Y^{(v)}|\in
O(|X^{(u-1)}+Y^{(v)}|)$. $|X^{(u-1)} + Y^{(v)}|$ is already counted in
$s - |X^{(u)}+Y^{(v)}|<k$, and so $|X^{(u-1)} + Y^{(v)}|<k$ and
$|X^{(u)}+Y^{(v)}|\in O(k)$. $s < k+|X^{(u)}+Y^{(v)}| \in O(k)$ and
hence $s \in O(k)$.
$\square$\end{proof}

\begin{lemma}
  \label{thm:elements-appended-to-q-in-phase-2-in-o-k}
$s'$, the total number of elements in all layer products appended to $q$ in phase 2, $\in O(k)$.
\end{lemma}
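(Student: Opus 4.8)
The plan is to bound $s'$ by decomposing the phase-2 layer products according to whether they are \emph{interior} ($u'>1$ and $v'>1$) or lie on the \emph{boundary} ($u'=1$ or $v'=1$), and to show each group contributes $O(k)$. For the interior products I would reuse Lemma~\ref{thm:top-left-neighbor-max-popped-before-self-min-popped}: every interior phase-2 product $(u',v')$ has $(u'-1,v'-1)$ appended to $q$ during phase~1. The map $(u',v')\mapsto(u'-1,v'-1)$ is injective (it is a fixed shift), so distinct interior phase-2 products are charged to distinct phase-1 products. By the LOH growth property, $|X^{(u')}+Y^{(v')}| = |X^{(u')}|\,|Y^{(v')}| \approx \alpha^2\,|X^{(u'-1)}|\,|Y^{(v'-1)}| = \alpha^2\,|X^{(u'-1)}+Y^{(v'-1)}|$, so for constant $\alpha$ each interior term is $O(1)$ times the size of its phase-1 image. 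Summing and invoking Lemma~\ref{thm:elements-appended-to-q-in-phase-1-in-o-k} (which gives $s\in O(k)$) bounds the interior contribution by $O(\alpha^2)\cdot s \in O(k)$.

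The boundary is the part Lemma~\ref{thm:top-left-neighbor-max-popped-before-self-min-popped} does not reach, and I expect it to be the main obstacle. Consider the row $u'=1$ (the column $v'=1$ is symmetric; the single corner $(1,1)$ has size $1\in O(k)$). Using the convexity Lemmas~\ref{thm:convexity-of-popping-min} and~\ref{thm:convexity-of-popping-max}, the set of row-$1$ products whose minimum has been popped is an initial segment $\{(1,w):w\le v^\ast\}$, and the set whose maximum has been popped is a (possibly shorter) initial segment; hence the row-$1$ phase-2 products form a contiguous block of layers ending at $v^\ast$, where $v^\ast$ is the largest index with $\lfloor(1,v^\ast)\rfloor$ popped during phase~1. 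Because $|X^{(1)}|=1$, the row contributes $\sum |Y^{(w)}|$ over this block, and since the layer sizes grow geometrically with ratio $\alpha>1$ this sum is $O\!\left(|Y^{(v^\ast)}|\right)$. So it suffices to show $|Y^{(v^\ast)}|\in O(k)$.

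To bound $|Y^{(v^\ast)}|$, the key step is to exhibit a phase-1 product of comparable size. I would use the LOH property $\max(Y^{(v^\ast-1)}) \le \min(Y^{(v^\ast)})$, which forces $\lceil(1,v^\ast-1)\rceil < \lfloor(1,v^\ast)\rfloor$ (ties in value are broken by the index, which still orders $(1,v^\ast-1)$ first). Since $\lfloor(1,v^\ast)\rfloor$ is popped during phase~1 by the definition of $v^\ast$, and tuples are popped in ascending order by Lemma~\ref{thm:corners-visited-in-sorted-order}, $\lceil(1,v^\ast-1)\rceil$ must also have been popped during phase~1; thus $(1,v^\ast-1)$ is a phase-1 product and its size $|Y^{(v^\ast-1)}|$ is one of the summands of $s$, giving $|Y^{(v^\ast-1)}|\le s\in O(k)$. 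The LOH ratio then yields $|Y^{(v^\ast)}|\approx\alpha\,|Y^{(v^\ast-1)}|\in O(k)$, so the row contributes $O(k)$; the column is identical. Adding the interior, row, column, and corner contributions gives $s'\in O(k)$. The only delicate points to verify carefully are the contiguity of the boundary block (a direct consequence of the two convexity lemmas restricted to a single row) and the strictness of $\lceil(1,v^\ast-1)\rceil < \lfloor(1,v^\ast)\rfloor$ including the tie-breaking case.
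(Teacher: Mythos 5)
Your proposal is correct and takes essentially the same approach as the paper: the same interior/boundary/corner decomposition, with interior products charged injectively to their diagonal phase-1 predecessors via Lemma~\ref{thm:top-left-neighbor-max-popped-before-self-min-popped} (factor $\alpha^2$), and boundary products handled by showing the row/column predecessor's maximum tuple must have been popped in phase 1 (using the singleton first layer plus ascending-order popping from Lemma~\ref{thm:corners-visited-in-sorted-order}), hence already counted in $s$. Your geometric-sum bound over the contiguous boundary block is only a cosmetic variation on the paper's direct per-product charging (in fact each row or column can contain at most one phase-2 product); both yield $s' \leq O(\alpha^2 + \alpha)\cdot s + O(1) \in O(k)$.
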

\begin{proof}
Each layer product appended to $q$ in phase 2 has had
$\lfloor(u',v')\rfloor$ popped in phase 1. By
lemma~\ref{thm:top-left-neighbor-max-popped-before-self-min-popped},
either $u'=1$ or $v'=1$ or $\lceil(u'-1,v'-1)\rceil$ must have been
popped before $\lfloor(u',v')\rfloor$.

First consider when $u'>1$ and $v'>1$. Each $(u',v')$ matches to
exactly one layer product $(u'-1,v'-1)$. Because
$\lceil(u'-1,v'-1)\rceil$ must have been popped before
$\lfloor(u',v')\rfloor$, then $\lceil(u'-1,v'-1)\rceil$ was also
popped during phase 1. $s$, the count of all elements whose layer
products were inserted into $q$ in phase 1, includes $|X^{(u'-1)} +
Y^{(v'-1)}|$ but does not include $X^{(u')}+Y^{(v')}$ (the latter is
appended to $q$ during phase 2). By exponential growth of layers in
$X$ and $Y$, $|X^{(u')}+Y^{(v')}|\approx \alpha^2 \cdot
|X^{(u'-1)}+Y^{(v'-1)}|$. These $|X^{(u'-1)}+Y^{(v'-1)}|$ values were
included in $s$ during phase 1, and thus the total number of elements
in all such $(u'-1,v'-1)$ layer products is $\leq s$. Thus the sum of
sizes of all layer products $(u',v')$ with $u'>1$ and $v'>1$ that are
appended to $q$ during phase 2 is $\approx\leq\alpha^2\cdot s$. 

When either $u'=1$ or $v'=1$, the total area of layer products must be
$\in O(n)$: $\sum_{u'} |X^{(u')}+Y^{(1)}| + \sum_{v'}
|X^{(u')}+Y^{(1)}| <2n$; however, it is possible to show that
contributions where $u'=1$ or $v'=1$ are $\in O(k)$:

W.l.o.g. for $u'>1$, $\lfloor(u',1)\rfloor$ is inserted into $H$ only
when $\lfloor(u'-1,1)\rfloor$ is popped. Thus at most one
$\lfloor(u',1)\rfloor$ can exist in $H$ at any time. Furthermore,
popping $\lfloor(u',1)\rfloor$ from $H$ requires previously popping
$\lceil(u'-1,1)\rceil$ from $H$: layer ordering on $X$ implies
$\max(X^{(u'-1)})\leq\min(X^{(u')})$ and $|Y^{(1)}=1|$ implies
$\min(Y^{(1)})=\max(Y^{(1)})$, and so $\lceil(u'-1,1)\rceil =
(\max(X^{(u'-1)}+Y^{(1)}), (u'-1,1), true) < \lfloor(u',1)\rfloor =
(\min(X^{(u')}+Y^{(1)}), (u',1), false)$. Thus $\lceil(u'-1,1)\rceil$
has been popped from $H$ and counted in $s$. By the exponential growth
of layers, the contribution of all such $u'>1, v'=1$ will be $\approx
\leq \alpha\cdot s$, and so the contributions of $u'>1, v'=1$ or
$u'=1, v'>1$ will be $\approx \leq 2 \alpha\cdot s$.

When $u'=v'=1$, the layer product has area 1. 

Therefore, $s'$, the total number of elements found in layer products
appended to $q$ during phase 2, has $s'\leq (\alpha^2 + 2\alpha)\cdot
s + 1$. By lemma~\ref{thm:elements-appended-to-q-in-phase-1-in-o-k}, $s\in
O(k)$, and thus $s'\in O(k)$.
$\square$
\end{proof}

\begin{theorem}
  \label{thm:total-runtime-in-o-n-plus-k}
The total runtime of the algorithm is $\in O(n+k)$.
\end{theorem}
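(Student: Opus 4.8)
The plan is to bound the cost of each phase separately and sum them, showing the total is dominated by the $O(n)$ of phase 0 and the $O(k)$ of phase 3, with phases 1--2 contributing only sublinear overhead.

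First I would dispatch phases 0 and 3, whose costs are essentially already established. Phase 0 LOHifies $X$ and $Y$; since $\alpha>1$ is constant, the iterative linear-time one-dimensional selection construction runs $\in O(n)$, as noted in the introduction. Phase 3 generates every value in each layer product recorded in $q$ and runs a single linear-time one-dimensional $k$-selection on them. The number of generated values is exactly $s+s'$; by Lemma~\ref{thm:elements-appended-to-q-in-phase-1-in-o-k} and Lemma~\ref{thm:elements-appended-to-q-in-phase-2-in-o-k}, $s\in O(k)$ and $s'\in O(k)$, so $s+s'\in O(k)$, and both the value generation and the selection are $\in O(k)$.

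The remaining work is phases 1--2, the heap-driven traversal, and here lies the main obstacle. The tempting bound is that the number of layer products touched is $\leq s+s'\in O(k)$ (each touched product has size $\geq 1$), giving $O(k)$ heap operations; but each binary-heap operation costs $O(\log|H|)$, so this naive accounting yields $O(k\log k)$, which exceeds the target $O(n+k)$. The resolution is to observe that $H$ holds tuples indexed by \emph{layer products}, not by individual values, and that the number of distinct layer products is only polylogarithmic. For constant $\alpha>1$ the layer sizes grow geometrically, so each LOH has $O(\log n)$ layers and there are at most $O(\log^2 n)$ layer products in total; equivalently, since every touched product has size $O(k)$ by the two lemmas above and $|X^{(u)}+Y^{(v)}|=|X^{(u)}|\cdot|Y^{(v)}|=\Theta(\alpha^{u+v})$, any touched $(u,v)$ satisfies $u+v\in O(\log k)$, so at most $O(\log^2 k)$ products are ever touched. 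Either bound caps the heap size (the auxiliary set stores at most one $\lfloor\cdot\rfloor$ and one $\lceil\cdot\rceil$ tuple per product) at a polylogarithmic quantity.

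With that bound in hand the finish is routine. Each popped $\lfloor(u,v)\rfloor$ triggers $O(1)$ insertions and each $\lceil(u,v)\rceil$ triggers none, and each product is popped at most twice, so the total number of heap operations is polylogarithmic; at $O(\log|H|)=O(\log\log n)$ per operation (with $O(1)$ expected duplicate suppression via the auxiliary set), phases 1--2 run in $o(n)$ time. Summing the phases gives $O(n)+o(n)+O(k)=O(n+k)$. The crux of the argument is the third paragraph: recognizing that the heap's granularity is the layer product rather than the individual element, so that exponential layer growth caps the heap work well below the $O(k)$ that a value-level analysis would suggest.
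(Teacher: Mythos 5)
Your proposal is correct and follows essentially the same route as the paper's own proof: phase 0 is $O(n)$ by linear-time LOHification, phases 1--2 are $o(n)$ because the heap operates at the granularity of layer products (of which there are only $O(\log_\alpha^2(n))$, each contributing at most a $\lfloor\cdot\rfloor$ and a $\lceil\cdot\rceil$ tuple), and phase 3 is $O(k)$ by the two lemmas bounding $s$ and $s'$. The only caveat is your parenthetical alternative bound ``every touched product has size $O(k)$, hence $O(\log^2 k)$ products'' --- the lemmas only bound products appended to $q$, not products whose tuples are inserted into $H$ but never popped --- but this remark is dispensable since your primary $O(\log^2 n)$ count, which is the paper's argument, suffices.
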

\begin{proof}
For any constant $\alpha>1$, LOHification of $X$ and $Y$ runs in
linear time, and so phase 0 runs $\in O(n)$.

The total number of layers in each LOH is $\approx \log_\alpha(n)$;
therefore, the total number of layer products is $\approx
\log^2_\alpha(n)$. In the worst-case scenario, the heap insertions and
pops (and corresponding set insertions and removals) will sort
$\approx 2 \log^2_\alpha(n)$ elements, because each layer product may
be inserted as both $\lfloor\cdot\rfloor or \lceil\cdot\rceil$; the
worst-case runtime via comparison sort will be $\in O(\log^2_\alpha(n)
\log(\log^2_\alpha(n))) \subset o(n)$. The operations to maintain a
set of indices on the hull have the same runtime per operation as
those inserting/removing to a binary heap, and so can be amortized
out. Thus, the runtimes of phases 1--2 are amortized out by the $O(n)$
runtime of phase 0.
  
Lemma~\ref{thm:elements-appended-to-q-in-phase-1-in-o-k} shows that
$s\in O(k)$. Likewise,
lemma~\ref{thm:elements-appended-to-q-in-phase-2-in-o-k} shows
that $s'\in O(k)$. The number of elements in all layer products in
$q$ during phase 3 is $s+s'\in O(k)$. Thus, the number of elements
on which the one-dimensional selection is performed will be $\in
O(k)$. Using a linear time one-dimensional selection algorithm, the
runtime of the $k$-selection in phase 3 is $\in O(k)$.

The total runtime of all phases $\in O(n+k+k+k)=O(n+k)$. 
$\square$

\end{proof}

\subsection{Space}
Space $\leq$ time, because each unit of work can only allocate
constant space. Thus the space usage is $\in O(n+k)$.

\section{Results}
Runtimes of the naive $O(n^2\log(n)+k)$ method, the soft heap-based
method from Kaplan \emph{et al.}, and the LOH-based method in this
paper are shown in table~\ref{table:runtimes}. The proposed approach
achieves a $>295\times$ speedup over the naive approach and
$>18\times$ speedup over the soft heap approach.

\begin{table}
\centering
\footnotesize
\begin{tabular}{ l | c c c c }

& \begin{tabular}{c} Naive \\ $n^2\log(n)+k$ \end{tabular} & \begin{tabular}{c} Kaplan \emph{et al.} \\ soft heap \end{tabular} & \begin{tabular}{c} Layer-ordered heap \\ (total=phase 0+phases 1--3) \end{tabular}\\

  \hline
$n=1000,k=250$ & 0.939 & 0.0511 & 0.00892=0.00693+0.002\\
$n=1000,k=500$ & 0.952 & 0.099 & 0.0102=0.00648+0.00374\\
$n=1000,k=1000$ & 0.973 & 0.201 & 0.014=0.00764+0.00639\\
$n=1000,k=2000$ & 0.953 & 0.426 & 0.0212=0.00652+0.0146\\
$n=1000,k=4000$ & 0.950 & 0.922 & 0.0278=0.00713+0.0206\\
\hline
$n=2000,k=500$ & 4.31 & 0.104 & 0.0194=0.0160+0.00342\\
$n=2000,k=1000$ & 4.11 & 0.203 & 0.0211=0.0139+0.00728 \\
$n=2000,k=2000$ & 4.17 & 0.432 & 0.0254=0.0140+0.0114\\
$n=2000,k=4000$ & 4.16 & 0.916 & 0.0427=0.0147+0.0280\\
$n=2000,k=8000$ & 4.13 & 2.03 & 0.0761=0.0143+0.0617\\
\hline
$n=4000,k=1000$ & 17.2 & 0.207 & 0.0507=0.0459+0.00488\\
$n=4000,k=2000$ & 17.2 & 0.422 & 0.408=0.0268+0.0141\\
$n=4000,k=4000$ & 17.1 & 0.907 & 0.0481=0.0277+0.0205\\
$n=4000,k=8000$ & 17.3 & 1.98 & 0.0907=0.0278+0.0629\\
$n=4000,k=16000$ & 17.3 & 4.16 & 0.133=0.0305+0.103\\
\end{tabular}
  \caption{Average runtimes on random uniform integer $X$ and $Y$ with $|X|=|Y|=n$. The layer-ordered heap implementation used $\alpha=2$ and resulted in $\frac{s+s'}{k} = 3.637$ on average. Individual and total runtimes are rounded to three significant figures.}
\label{table:runtimes}
\end{table}

\section{Discussion}
The algorithm can be thought of as ``zooming out'' as it pans through
the layer products, thereby passing the value threshold at which the
$k^{\text{th}}$ best value $X_i+Y_j$ occurs. It is somewhat reminiscent of
skip lists\cite{pugh:skip}; however, where a skip list begins coarse
and progressively refines the search, this approach begins finely and
becomes progressively coarser. The notion of retrieving the best $k$
values while ``overshooting'' the target by as little as possible
results in some values that may be considered but which will not
survive the final one-dimensional selection in phase 3. This is
reminiscent of ``corruption'' in Chazelle's soft heaps. Like soft
heaps, this method eschews sorting in order to prevent a runtime $\in
\Omega(n \log(n))$ or $\in \Omega(k \log(k))$. But unlike soft heaps,
LOHs can be constructed easily using only an implementation of
median-of-medians (or any other linear time one-dimensional selection
algorithm).

Phase 3 is the only part of the algorithm in which $k$ appears in the
runtime formula. This is significant because the layer products in $q$
at the end of phase 2 could be returned in their compressed form
(\emph{i.e.}, as the two layers to be combined). The total runtime of
phases 0--2 is $\in O(n)$. It may be possible to recursively perform
$X+Y$ selection on layer products $X^{(u)}+Y^{(v)}$ to compute layer
products constituting exactly the $k$ values in the solution, still in
factored Cartesian layer product form. Similarly, it may be possible
to perform the one-dimensional selection without fully inflating every
layer product into its constituent elements. For some applications, a
compressed form may be acceptable, thereby making it plausible to
remove the requirement that the runtime be $\in \Omega(k)$.

As noted in theorem~\ref{thm:total-runtime-in-o-n-plus-k}, even fully
sorting all of the minimal and maximum layer products would be $\in
o(n)$; sorting in this manner may be preferred in practice, because it
simplifies the implementation
(Listing~\ref{alg:simplified-cartesian-sum-select}) at the cost of
incurring greater runtime in practice when $k \ll n^2$. Furthermore,
listing~\ref{alg:simplified-cartesian-sum-select} is unsuitable for
online processing (\emph{i.e.}, where $X$ and $Y$ are extended on the
fly or where several subsequent selections are performed), whereas
listing~\ref{alg:cartesian-sum-select} could be adapted to those uses.

Phase 0 (which performs LOHification) is the slowest part of the
presented python implementation; it would benefit from having a
practically faster implementation to perform LOHify.

The fast practical performance is partially due to the algorithm's
simplicity and partially due to the contiguous nature of LOHs. Online
data structures like soft heap are less easily suited to contiguous
access, because they support efficient removal and therefore move
pointers to memory rather than moving the contents of the memory.

\section{Acknowledgements}
Thanks to Patrick Kreitzberg, Kyle Lucke, and Jake Pennington for
fruitful discussions and kindness.

\section{Declarations}

\subsection{Funding}
This work was supported by grant number 1845465 from the National
Science Foundation.

\subsection{Code availability}
Python source code and \LaTeX ~for this paper are available at
\url{https://bitbucket.org/orserang/selection-on-cartesian-product/}
(MIT license, free for both academic and commercial use). C++ code
(MIT license, free for both academic and commercial use) can be found
in \url{https://bitbucket.org/orserang/neutronstar/}, the world's
fastest isotope calculator.

\subsection{Conflicts of interest}
 The authors declare that they have no conflicts of interest.

\section{Supplemental information}
\subsection{Python code}
 \lstset{language=Python,
   basicstyle=\ttfamily\footnotesize,
   keywordstyle=\color{blue}\ttfamily,
   stringstyle=\color{red}\ttfamily,
   commentstyle=\color{magenta}\ttfamily,
   morecomment=[l][\color{magenta}]{\#},
   breaklines=true,
   label=alg:layer-ordered-heap
 }
 \lstinputlisting[caption={{\tt LayerOrderedHeap.py}: A class for LOHifying, retrieving layers, and the minimum and maximum value in a layer.}]{LayerOrderedHeap.py}

 \lstset{language=Python,
   basicstyle=\ttfamily\footnotesize,
   keywordstyle=\color{blue}\ttfamily,
   stringstyle=\color{red}\ttfamily,
   commentstyle=\color{magenta}\ttfamily,
   morecomment=[l][\color{magenta}]{\#},
   breaklines=true,
   label=alg:cartesian-sum-select
 }
 \lstinputlisting[caption={{\tt CartesianSumSelection.py}: A class for efficiently performing selection on $X+Y$ in $\Theta(n+k)$ steps.}]{CartesianSumSelection.py}

 \lstset{language=Python,
   basicstyle=\ttfamily\footnotesize,
   keywordstyle=\color{blue}\ttfamily,
   stringstyle=\color{red}\ttfamily,
   commentstyle=\color{magenta}\ttfamily,
   morecomment=[l][\color{magenta}]{\#},
   breaklines=true,
   label=alg:simplified-cartesian-sum-select
 }
 \lstinputlisting[caption={{\tt SimplifiedCartesianSumSelection.py}: A simplified implementation of Listing~\ref{alg:cartesian-sum-select}. This implementation is slower when $k \ll n^2$; however, it has the same asymptotic runtime for any $n$ and $k$: $\Theta(n+k)$.}]{SimplifiedCartesianSumSelection.py}

\end{document}